\journal{Metamaterials}
\begin{document}

\def \e{\varepsilon}
\def \o{\omega}
\def \Div{\hbox{\rm div}}
\def \be{\begin{equation}}
\def \ee{\end{equation}}

\newtheorem*{Theorem}{Theorem}
\newtheorem{Proposition}{Proposition}
\newtheorem{Lemma}{Lemma}

\begin{frontmatter}

%% Title, authors and addresses

%% use the tnoteref command within \title for footnotes;
%% use the tnotetext command for the associated footnote;
%% use the fnref command within \author or \address for footnotes;
%% use the fntext command for the associated footnote;
%% use the corref command within \author for corresponding author footnotes;
%% use the cortext command for the associated footnote;
%% use the ead command for the email address,
%% and the form \ead[url] for the home page:
%%
%% \title{Title\tnoteref{label1}}
%% \tnotetext[label1]{}
%% \author{Name\corref{cor1}\fnref{label2}}
%% \ead{email address}
%% \ead[url]{home page}
%% \fntext[label2]{}
%% \cortext[cor1]{}
%% \address{Address\fnref{label3}}
%% \fntext[label3]{}

\title{Homogenizing metamaterials, three times}

%% use optional labels to link authors explicitly to addresses:
%% \author[label1,label2]{<author name>}
%% \address[label1]{<address>}
%% \address[label2]{<address>}

\author{Didier Felbacq}

\address{University of Montpellier 2 and Institut Universitaire de France\\ Laboratory Charles Coulomb Unit\'e Mixte de Recherche du Centre National de la Recherche Scientifique 5221\\ Montpellier, France}

\begin{abstract}
%% Text of abstract
The homogenization of a metamaterial made of a collection of scatterers periodically disposed is studied from three different points of view. Specifically tools for multiple scattering theory, functional analysis, differential geometry and optimization are used. Detailed numerical results are given and the connections between the different approaches are enlightened. 
\end{abstract}

\begin{keyword}
%% keywords here, in the form: keyword \sep keyword
homogenization theory \sep metamaterials \sep photonic crystals \sep scattering theory
%% MSC codes here, in the form: \MSC code \sep code
%% or \MSC[2008] code \sep code (2000 is the default)

\end{keyword}

\end{frontmatter}

%%
%% Start line numbering here if you want
%%
% \linenumbers

%% main text
\section{Introduction}
\label{}
Giving a general definition of what homogenization is is difficult, because of the various meanings attached to it. 
The physics at stake is the situation when a wave illuminates a complicated object, generally consisting of a periodic set of scatterers, contained in some domain $\Omega$ and gives rise to a diffracted field $U^s$. Loosely speaking, the homogenization problem consists in identifying {\it homogeneous} constitutive relations such that the same domain $\Omega$ inside which these constitutive relations hold, leads to a diffracted field $U_h^s$ such that $U^s$ and $U^s_h$ are close to each other, in some meaning to be specified. This can be done reasonably only if the wavelength $\lambda$ is larger than the period $d$. This identifies a small parameter $\eta=d/\lambda$.

Having been educated by mathematicians, the definition that I would consider the best one is the following: consider a partial differential equation $P_{\eta}$ with oscillating coefficients $a_{\eta}(x)$. Consider a solution $u_{\eta}$ of the equation: $P_{\eta}(u_{\eta})=f$, where $f$ is some convenient source term. Then the goal of homogenization theory is to find a convenient topology in which $u_{\eta}$ converges to a function $u_0$ satisfying an equation:
$P_0(u_0)=f$. The operator $P_0$ is called the homogenized operator \cite{tartar}. This definition is quite clear and at the end, it leads to results such as: ``when $\eta$ tends to $0$, $u_{\eta}$ tends to $u_0$ in some specific meaning''. The point of being able to specify a convergence is very interesting, in that it gives a clear meaning to the question ``how close are $U^s$ and $U^s_h$ ?''.

In the metamaterials community, it is not rare that the notion of parameters extraction be used as a homogenization scheme \cite{simovski}. In that situation the structure is considered a black box (or rather a black slab!). This pragmatic approach, although it might be useful, cannot be considered a homogenization procedure.

Sometimes the Bloch spectrum is used as well. However, care should be taken because of the following result:
\begin{Proposition}
 Given any isofrequency dispersion curves given implicitly in the form $F(k_x,k_y)=0$, there exists a spatially and temporally dispersive permittivity $\varepsilon(k,\omega)$ reproducing these curves.
 \end{Proposition}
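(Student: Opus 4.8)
The plan is to reverse-engineer a dispersion relation from the desired isofrequency curve. I would start by recalling the origin of the Bloch dispersion relation for a spatially and temporally dispersive medium. For a plane-wave ansatz $U \sim e^{i(\mathbf{k}\cdot\mathbf{x}-\omega t)}$ in a homogeneous but dispersive dielectric, the wave equation reduces to an algebraic constraint of the form $|\mathbf{k}|^2 = \frac{\omega^2}{c^2}\,\varepsilon(\mathbf{k},\omega)$ (working in the scalar, TE or TM, setting to avoid tensorial complications). Read the other way around, this relation \emph{defines} the required permittivity once we fix a branch $\omega(\mathbf{k})$: we are free to prescribe $\varepsilon$ as a function of $(\mathbf{k},\omega)$, and the latitude of allowing dependence on $\mathbf{k}$ (spatial dispersion) in addition to $\omega$ (temporal dispersion) is precisely what makes the construction possible.

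The key steps, in order, would be the following. First, fix the frequency at some value $\omega=\omega_0$ and interpret the given implicit curve $F(k_x,k_y)=0$ as the locus of admissible wavevectors at that frequency. Second, solve the dispersion identity for the permittivity on this locus, namely set
\be
\varepsilon(\mathbf{k},\omega_0) = \frac{c^2\,(k_x^2+k_y^2)}{\omega_0^2}
\ee
for every $\mathbf{k}$ lying on $F(k_x,k_y)=0$; this guarantees that each such $\mathbf{k}$ is a genuine solution of the wave equation at $\omega_0$, so the prescribed curve is exactly reproduced as an isofrequency contour. Third, extend $\varepsilon(\mathbf{k},\omega)$ off the curve and off $\omega_0$ in any smooth manner consistent with the physical symmetry and reality constraints (for instance $\varepsilon(-\mathbf{k},-\omega)=\overline{\varepsilon(\mathbf{k},\omega)}$), which is possible because these constraints leave an infinite-dimensional freedom once the values on a single curve are imposed.

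The main obstacle I anticipate is not the algebraic inversion, which is immediate, but the question of \emph{well-posedness and physical admissibility} of the extension. One must check that the constructed $\varepsilon(\mathbf{k},\omega)$ can be made single-valued and consistent when the curve $F(k_x,k_y)=0$ is not the graph of a function (e.g.\ when it is closed or has several branches, so that a given direction of $\mathbf{k}$ meets the curve at several radii), and that the extension respects causality (Kramers--Kronig-type analyticity in $\omega$) if one insists on a physically realizable medium. However, since the statement only asks for \emph{existence} of a spatially and temporally dispersive $\varepsilon$, the crux is simply to exhibit one such function; the freedom afforded by joint $(\mathbf{k},\omega)$ dependence means the defining identity above can always be satisfied pointwise on the prescribed locus, and any admissible smooth interpolation completes the argument. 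This is precisely why the Proposition serves as a cautionary remark: the Bloch isofrequency diagram, taken alone, carries essentially no constraint and therefore cannot by itself certify a homogenization.
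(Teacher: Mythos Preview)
Your core idea---invert the plane-wave dispersion relation $|\mathbf{k}|^2 = k_0^2\,\varepsilon(\mathbf{k},\omega)$ to manufacture $\varepsilon$---is exactly the paper's. The execution differs: you prescribe $\varepsilon$ only on the locus $F=0$ and then appeal to an unspecified smooth extension, whereas the paper writes down a single global formula, $\varepsilon(\mathbf{k},\omega) = \bigl(k^2 + F(k_x,k_y)\bigr)/k_0^2$, valid for all $\mathbf{k}$. This one-line trick has two advantages over your route. First, it removes any extension step. Second, and more importantly, it guarantees that the isofrequency locus of the constructed medium is \emph{exactly} $\{F=0\}$, since the dispersion relation $k^2 = k_0^2\varepsilon$ collapses identically to $F(k_x,k_y)=0$; your construction, by contrast, only ensures that $\{F=0\}$ is \emph{contained} in the locus---an ``arbitrary'' smooth extension off the curve may well satisfy $k^2 = k_0^2\varepsilon(\mathbf{k},\omega_0)$ at stray points and generate spurious branches. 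Your concerns about multi-valuedness, causality and Kramers--Kronig are legitimate physics but go beyond what the Proposition asserts; the paper treats the statement as a purely algebraic cautionary remark, and its proof reflects that minimalism.
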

\begin{proof} 
Write: $k^2=k^2+F(k_x,k_y)$ and define: $\varepsilon(k,\omega)=\frac{k^2+F(k_x,k_y)}{k_0^2}$, where $k_0=\o/c$.
\end{proof}
 This shows that given any dispersion curves, it is always possible to reproduce it by using a spatially dispersive permittivity. However, nothing can be said on the complete electromagnetic field inside the medium: the Bloch diagram only accounts for the plane wave part of the field. The introduction of spatial dispersion should only be made with great care \cite{alex}.
\begin{figure}
   \begin{center}
   \includegraphics[height=6cm]{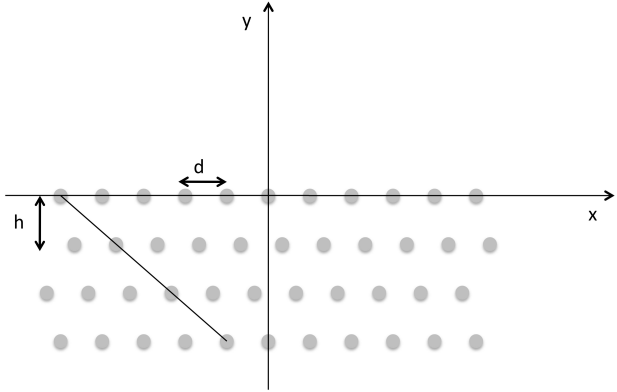}
   \end{center}
   \caption
   { \label{fig1} Sketch of the structure under study. It is made of a stack of gratings, each consisting of periodically disposed scatterers.}
\end{figure} 

In the following, the structure considered as a model problem is a periodic set of 2D scatterers (cf. fig. (\ref{fig1})). The medium is infinite in the $x$ direction and it is made out of a stack of basic layers made of an infinite number of rods periodically disposed at points $x=n d$. 

In the following, three different points of view are given on the homogenization of this structure: multiple scattering, double-scale, micro-local.

\section{Multiple scattering homogenization}
The approach proposed here is reminiscent of that in \cite{alu}, although it is rather different because here no averaged field is defined. It would be interesting to make a connection between these two approaches. \\
Our point is to describe the electromagnetic behavior of one line of scatterer (a grating) alone. Each scatterer ``n'' is characterized by a scattering coefficient $s_0$. When it is illuminated by an incident field $E^i$, it gives rise to a field $u^s(r)=b H_0(k_0|r|)$ where $b=s_0 E^i(n)$. For the infinite set of scatterers, this gives a diffracted field that reads:
\be
u^s=\sum_n b_n H_0(k_0|r-r_n|)
\ee
the field diffracted by the n$^{th}$ scatterer is obtained by writing that it is the response to the incident field $E^i$ and 
the field difffracted by the other scatterers:
\be
b_n=s_0\left( E^i(n)+\sum_{m \neq n} b_m H_0(k_0|m-n|d) \right)
\ee
If the incident field is pseudo-periodic, i.e. $E^i(n)=e^{ikd} E^i (n-1)$, then it holds:
\be
b_0=\frac{ E^i(0)}{s_0^{-1}-\sum_{m \neq 0}  e^{ikm}H_0(k_0|m|d)}
\ee
the series that enters this relation can be written:
\begin{eqnarray*}
\sum_{m \neq 0}  e^{ikmd}H_0(k_0|m|d)=-1-\frac{2i}{\pi}\gamma+\frac{2i}{\pi}\ln\left(\frac{4\pi}{k_0 d}\right)+\frac{2}{d\beta_0}\\
+\frac{2}{d}\sum_{n>0} \left( \frac{1}{\beta_n}+\frac{1}{\beta_{-n}}-\frac{d}{i\pi |n|}\right)
\end{eqnarray*}
An asymptotic analysis of this series \cite{bloch} allows to write the following expansion :
\begin{Lemma}
\be
\frac{2}{d}\sum_{n>0} \left( \frac{1}{\beta_n}+\frac{1}{\beta_{-n}}-\frac{d}{i\pi |n|}\right)
=O[(k_0d)^2]
\ee
\end{Lemma}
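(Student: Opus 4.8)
The plan is to replace the diffraction constants by their explicit form and reduce the estimate to a single convergent large-$n$ expansion. For the pseudo-periodic problem with Bloch phase $kd$ the transverse propagation constants are $\beta_n=\sqrt{k_0^2-\alpha_n^2}$ with $\alpha_n=k+2\pi n/d$, the evanescent orders (those with $|n|$ large) being taken on the branch $\beta_n=i\sqrt{\alpha_n^2-k_0^2}$ so that the diffracted field decays. First I would introduce the dimensionless parameters $\phi=kd$ and $K=k_0d$, both small in the homogenization regime $\lambda\gg d$ (for a propagating Bloch mode $k=O(k_0)$, hence $\phi=O(K)$), and rewrite the summand as
\[
\frac{1}{\beta_n}+\frac{1}{\beta_{-n}}-\frac{d}{i\pi|n|}
=-id\left[\frac{1}{\sqrt{(\phi+2\pi n)^2-K^2}}+\frac{1}{\sqrt{(\phi-2\pi n)^2-K^2}}-\frac{1}{\pi n}\right].
\]

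Next I would expand each reciprocal square root for $n\ge 1$. Writing $a=2\pi n$ and using $(1+x)^{-1/2}=1-\tfrac12 x+\tfrac38 x^2-\cdots$ with $x=2\phi/a+(\phi^2-K^2)/a^2$, each of the two square-root terms becomes $a^{-1}\mp\phi\,a^{-2}+\tfrac12(2\phi^2+K^2)\,a^{-3}+\cdots$. The mechanism of the lemma is the resulting pattern of cancellations: the two leading $a^{-1}=1/(2\pi n)$ contributions add up to exactly the counter-term $1/(\pi n)$ that was subtracted (this is precisely why that subtraction makes the bare series converge), while the $O(a^{-2})$ pieces are odd in $\phi$ and cancel between the $+n$ and $-n$ terms. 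Consequently
\[
\frac{1}{\beta_n}+\frac{1}{\beta_{-n}}-\frac{d}{i\pi|n|}
=-id\,\frac{2\phi^2+K^2}{(2\pi n)^3}+O\!\left(\frac{1}{n^5}\right),
\]
the $n^{-5}$ remainder carrying two further powers of the small parameters.

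Summing the leading term with $\sum_{n\ge 1}n^{-3}=\zeta(3)$ then yields
\[
\frac{2}{d}\sum_{n>0}\left(\frac{1}{\beta_n}+\frac{1}{\beta_{-n}}-\frac{d}{i\pi|n|}\right)
=-\frac{i\,\zeta(3)}{4\pi^3}\,(2\phi^2+K^2)+\cdots
=-\frac{i\,\zeta(3)}{4\pi^3}\,(2k^2+k_0^2)\,d^2+\cdots,
\]
whose prefactor is a fixed multiple of $(k_0d)^2$ because $k=O(k_0)$; this is the asserted bound $O[(k_0d)^2]$.

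The genuine work is not the formal expansion but the \emph{uniform control of the remainder}. I would have to check that the inner expansion parameter $2\phi/a+(\phi^2-K^2)/a^2$ is uniformly small for every $n\ge 1$ (which holds once $\phi$ lies in the first Brillouin zone and $K<2\pi$), so that the Taylor expansion is legitimate term by term, and that the tail $\sum_{n\ge 1}O(n^{-5})$ converges with a bound proportional to a higher power of $k_0d$, making the interchange of summation and expansion valid and the error truly subdominant. A secondary point needing care is the branch choice for $\beta_n$ and the separation of the finitely many propagating orders (in particular $\beta_0$, already extracted) from the evanescent tail to which these asymptotics apply.
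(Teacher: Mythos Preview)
Your argument is correct: the substitution $\beta_{\pm n}=i\sqrt{(k\pm 2\pi n/d)^2-k_0^2}$, the Taylor expansion in $1/(2\pi n)$, the cancellation of the $1/(\pi n)$ counter-term and of the odd-in-$\phi$ contribution, and the resulting leading term $-\dfrac{i\zeta(3)}{4\pi^3}(2k^2+k_0^2)d^2$ are all right, and your remarks on uniform control of the remainder for $n\ge 1$ with $\phi$ in the first Brillouin zone are the correct caveats.

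Note, however, that the paper does not actually supply a proof of this lemma: it merely states the result and defers the ``asymptotic analysis of this series'' to reference~\cite{bloch}. Your explicit small-$(k_0d,kd)$ expansion is precisely the kind of computation that citation is pointing to, so in spirit your approach coincides with what the paper invokes; you have simply written out what the paper leaves implicit. There is nothing to compare beyond that --- your derivation stands on its own and even gives the explicit leading coefficient, which the paper does not display.
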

Above the grating, the propagative part of the electric field reads as:
$E^+(y)=e^{-i\beta_0 y}+r(k_0,\beta_0) e^{i\beta_0 y}$ and below it reads $E^-(y)=t(k_0,\beta_0) e^{-i\beta_0 y}$ where: $r(k_0,\beta_0)=\frac{2}{\beta_0 d}b_0$ and $t=1+r$

Energy conservation implies that : $|r|^2+|1+r|^2=1$ and therefore the following representation holds:

\begin{Proposition}
 There exists a real function $\chi(k_0,\beta_0)$ such that: \[r(k_0,\beta_0)=\frac{-1}{1+i\chi(k_0,\beta_0)}\]
\end{Proposition}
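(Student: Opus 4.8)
The plan is to reduce the statement to a purely algebraic fact about the complex number $r$, using only the energy-conservation identity $|r|^2+|1+r|^2=1$ and then solving the defining relation for $\chi$. Writing $r=x+iy$ with $x,y$ real, I would first expand the energy balance: the identity becomes $2(x^2+y^2)+2x=0$ after cancellation, i.e.
\be
\mathrm{Re}(r)=-|r|^2 .
\ee
Geometrically this confines $r$ to the circle of radius $\frac{1}{2}$ centred at $-\frac{1}{2}$, namely $\left(x+\frac{1}{2}\right)^2+y^2=\frac{1}{4}$.

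Second, I would invert the sought relation. Solving $r=-1/(1+i\chi)$ gives $1+i\chi=-1/r$, hence
\be
\chi=\frac{i(1+r)}{r}=\frac{it}{r},
\ee
so the whole statement is equivalent to showing this right-hand side is real, i.e. that $t/r=(1+r)/r$ is purely imaginary. This follows at once from the energy identity, since $\mathrm{Re}\left((1+r)\bar r\right)=\mathrm{Re}(r)+|r|^2=0$; dividing by $|r|^2>0$ shows $(1+r)/r$ has zero real part, so $\chi=it/r=\mathrm{Im}(r)/|r|^2$ is a real number depending on $(k_0,\beta_0)$, and substituting it back reproduces $r=-1/(1+i\chi)$ by construction. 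The conceptual heart is that the Möbius map $\chi\mapsto-1/(1+i\chi)$ carries the extended real axis bijectively onto precisely this circle, so every admissible $r$ has a unique real preimage.

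The one point requiring care, and the main obstacle, is the degenerate value $r=0$: it satisfies the energy identity but is the single point of the circle not attained by any finite $\chi$ (it corresponds to $\chi=\infty$, a perfectly transparent grating). The formula $\chi=\mathrm{Im}(r)/|r|^2$ makes the division by $|r|^2$ explicit, so one must either assume $r\neq0$ or allow $\chi$ to take the value $\infty$ at that isolated frequency. Away from this point the construction determines $\chi$ unambiguously, and since $r$ depends smoothly on $(k_0,\beta_0)$ and $|r|^2$ stays bounded away from zero on any regime where the grating reflects, $\chi$ inherits this regularity.
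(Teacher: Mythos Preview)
Your argument is correct and follows essentially the same route as the paper: from $|r|^2+|1+r|^2=1$ one deduces $\mathrm{Re}(r)=-|r|^2$, and then defines $\chi=\mathrm{Im}(r)/|r|^2$, which is exactly the definition the paper gives. Your additional remarks on the M\"obius-map geometry and the degenerate point $r=0$ are valid elaborations not present in the paper's terse proof, but they do not change the underlying approach.
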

\begin{proof} energy conservation shows that: $\frac{\Re{(r)}}{|r|^2}=-1$, the theorem follows by defining:
$\chi=\frac{\Im{(r)}}{|r|^2}$.
\end{proof}
\begin{Proposition}
For the propagative part of the field, the grating is equivalent to an infinitely thin slab whose transfer matrix is:
\be
T_g=\left(
\begin{array}{cc}
1 & 0 \\ \frac{1}{L} & 1
\end{array}
\right)
\ee
where $L=\chi/2\beta_0$. When the ratio $a/\lambda$ is very small, it holds: $L\sim\frac{d}{2\pi} \ln\left(\frac{d}{2\pi a}\right)$
\end{Proposition}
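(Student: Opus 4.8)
The plan is to prove the two assertions in turn: first that the transfer matrix reduces to the claimed triangular form with $L=\chi/2\beta_0$, and then that $L$ grows logarithmically as $a/\lambda\to0$.

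For the matrix itself I would take as state vector the pair $(E,\partial_y E)$ evaluated just above and just below the grating plane $y=0$, and let $T_g$ carry the state above into the state below. From the stated fields one reads $E^+(0)=E^-(0)=1+r$, so the (tangential) field is continuous across the infinitely thin sheet; this already pins the first row of $T_g$ to $(1,0)$. Because the grating perturbs the propagative order exactly like a $\delta$-potential in the reduced one-dimensional problem $\partial_y^2E+\beta_0^2E=0$, the Wronskian is preserved and $\det T_g=1$, forcing $T_{g,22}=1$. The only nontrivial entry is $T_{g,21}$, obtained from the jump of the normal derivative: $\partial_yE^-(0)-\partial_yE^+(0)=-2i\beta_0 r$, so that $T_{g,21}=\bigl(\partial_yE^-(0)-\partial_yE^+(0)\bigr)/E(0)=-2i\beta_0\,r/(1+r)$. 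Inserting the representation $r=-1/(1+i\chi)$ of Proposition~2 and using the one-line identity $r/(1+r)=i/\chi$ gives $T_{g,21}=-2i\beta_0\cdot i/\chi=2\beta_0/\chi$; with $L:=\chi/2\beta_0$ this is exactly $1/L$, and $T_g$ takes the announced form. I would stress that the sole role of energy conservation here is to make $\chi$ real, which is what entitles $L$ to be read as a genuine effective length.

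For the asymptotics I express $\chi$ through the multiple-scattering data. From $r=\frac{2}{\beta_0d}b_0$ with $b_0=1/(s_0^{-1}-S)$, $S=\sum_{m\neq0}e^{ikmd}H_0(k_0|m|d)$, the relation $r=-1/(1+i\chi)$ reads $1+i\chi=-\frac{\beta_0d}{2}(s_0^{-1}-S)$. I then feed in two small-parameter expansions. For the lattice sum I use the expansion quoted before the Lemma and, invoking the Lemma, discard the tail $\frac{2}{d}\sum_{n>0}(\cdots)=O[(k_0d)^2]$, retaining $S\simeq -1-\frac{2i}{\pi}\gamma+\frac{2i}{\pi}\ln\frac{4\pi}{k_0d}+\frac{2}{d\beta_0}$. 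For the isolated scatterer I use the small-radius monopole expansion $s_0^{-1}\simeq -1-\frac{2i}{\pi}\gamma-\frac{2i}{\pi}\ln\frac{k_0a}{2}$. The crux is that the non-logarithmic parts organise themselves: in $s_0^{-1}-S$ the constants $-1$ and the Euler terms $-\frac{2i}{\pi}\gamma$ cancel identically, the two logarithms merge via $\ln\frac{k_0a}{2}+\ln\frac{4\pi}{k_0d}=\ln\frac{2\pi a}{d}$, and one is left with $s_0^{-1}-S\simeq\frac{2i}{\pi}\ln\frac{d}{2\pi a}-\frac{2}{d\beta_0}$. Multiplying by $-\frac{\beta_0d}{2}$, the real term $-\frac{2}{d\beta_0}$ produces exactly the $+1$ that cancels the $-1$ in $1+i\chi$, leaving the purely imaginary $i\chi\simeq-\frac{i\beta_0d}{\pi}\ln\frac{d}{2\pi a}$. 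Reading off $\chi$ and using $L=\chi/2\beta_0$ then yields $|L|\simeq\frac{d}{2\pi}\ln\frac{d}{2\pi a}$, the classical wire-grid inductance.

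I expect the cancellation step to be the main obstacle. Its significance is twofold: the vanishing of the \emph{real} part of $s_0^{-1}-S$ after multiplication by $-\frac{\beta_0d}{2}$ is precisely the quantitative shadow of the reality of $\chi$ established in Proposition~2, and the vanishing of the $O(1)$ and Euler constants is what leaves a clean logarithm. Consequently the result is sensitive to the \emph{exact} $O(1)$ constants in both the single-scatterer expansion of $s_0^{-1}$ and the lattice sum $S$; an error in either would survive the cancellation and corrupt the argument of the logarithm. Fixing the overall sign of $L$ likewise requires committing to an outgoing-wave convention for $H_0$ and to the scatterer's boundary condition, after which one recovers $L\sim\frac{d}{2\pi}\ln\frac{d}{2\pi a}$. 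Everything else — the jump computation, the identity $r/(1+r)=i/\chi$, and the appeal to the Lemma to drop the $O[(k_0d)^2]$ remainder — is routine.
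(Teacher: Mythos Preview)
Your derivation of the transfer matrix is essentially the paper's: both compute the jump $\partial_y E^+(0)-\partial_y E^-(0)=2i\beta_0 r$, rewrite $E(0)=1+r=-i\chi\,r$ via Proposition~2 (your identity $r/(1+r)=i/\chi$ is the same relation), and read off $T_{g,21}=2\beta_0/\chi=1/L$. Your appeal to $\det T_g=1$ to fix $T_{22}=1$ is a small addition the paper leaves implicit.

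Where you genuinely go beyond the paper is the asymptotic $L\sim\frac{d}{2\pi}\ln\frac{d}{2\pi a}$: the paper's proof block stops at the jump calculation and offers no argument for this clause at all. Your route --- writing $1+i\chi=-\frac{\beta_0 d}{2}(s_0^{-1}-S)$, inserting the lattice-sum expansion (with the Lemma to discard the $O[(k_0d)^2]$ tail) together with the small-radius Dirichlet-wire expansion of $s_0^{-1}$, and tracking the cancellation of the $-1$ and Euler terms so that only $\frac{2i}{\pi}\ln\frac{d}{2\pi a}-\frac{2}{d\beta_0}$ survives --- is the natural filling of that gap, and your algebra checks. Your caveat about the overall sign is appropriate: it depends on the $H_0^{(1)}$/$H_0^{(2)}$ convention and on the boundary condition fixing $s_0$, neither of which the paper makes fully explicit, so stating the result as $|L|\sim\frac{d}{2\pi}\ln\frac{d}{2\pi a}$ and then invoking the outgoing-wave convention is the honest way to close.
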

\begin{proof}
The electric field is continuous, the matrix $T_g$ is obtained by computing the jump of the normal derivative of the field: $\partial_y E^+(0)-\partial_y E^-(0)=2i\beta_0 r$. Using proposition 1, we get: $E(0)=1+r=\frac{i\chi}{1+i\chi}=-i\chi\,r$. Hence: $2i\beta_0 r=-\frac{2\beta_0}{\chi}E(0)$.
\end{proof}
A layer of the structure can thus be described by this slab surrounded by homogeneous layers of height $h/2$, the transfer matrix of this basic sandwich structure (cf. fig. (\ref{fig2})) being: $T=T(h/2)T_g T(h/2)$, where :
\be
T(h/2)=\left(
\begin{array}{cc}
\cos(\beta_0 h/2) & \sin(\beta_0 h/2)/\beta_0 \\
-\beta_0 \sin(\beta_0 h/2)  & \cos(\beta_0 h/2)
\end{array}
\right)
\ee
A simple calculation shows that:
\be
T=\left(\begin{array}{cc} 
\cos\!\left(\beta_0\, h\right) + \frac{\sin\!\left(\beta_0\, h\right)}{2\, L\, \beta_0} & \frac{{\sin\!\left(\frac{\beta_0\, h}{2}\right)}^2 }{L\, \beta_0^2}+\frac{\sin\!\left(\beta_0\, h\right)}{\beta_0}\\ 
\frac{{\cos\!\left(\frac{\beta_0\, h}{2}\right)}^2}{L} - \beta_0\, \sin\!\left(\beta_0\, h\right) & \cos\!\left(\beta_0\, h\right) + \frac{\sin\!\left(\beta_0\, h\right)}{2\, L\, \beta_0} 
\end{array}\right)
\ee
The transfer matrix is known to be an unstable object for numerical purposes. The preferred quantity is the scattering matrix \cite{brahim}. Therefore, the homogenization is now performed numerically by searching for a permittivity $\e_{\rm eff}(k_0,\beta_0)$ such that the scattering matrix of a homogeneous slab with this permittivity fits that of the sandwich structure. Specifically, the cost function is:
\begin{equation}\label{opt}
J[\e_{\rm eff}]=|r_{g}-r_{\rm eff}|^2+|t_g-t_{\rm eff}|^2
\end{equation}
where $r_g$ and $t_g$ are the reflection and transmission coefficients of the sandwich structure and $r_{\rm eff}, t_{\rm eff}$ that of the homogeneous slab.
\begin{figure}
   \begin{center}
   \includegraphics[height=6cm]{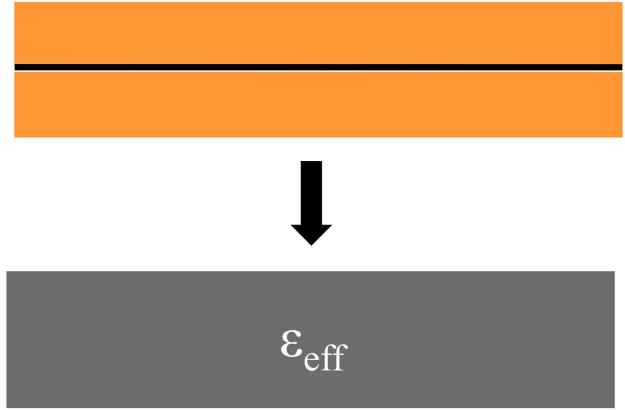}
   \end{center}
   \caption
   { \label{fig2} The homogenization of the sandwich structure. The horizontal black line represents the grating.}
\end{figure} 
\subsection{Numerical applications}
First the scatterer in the basic cell is a dielectric rod with permittivity $\e_r=9$ and radius $a/d=1/2$, i.e. the rods are touching. In that case, the low-frequency behavior is well-known: the sandwich structure is equivalent to a homogeneous slab with permittivity: $\e_{\rm lw}=1+(\e_r-1)\pi \left( \frac{a}{d}\right)^2$. 
\begin{figure}
   \begin{center}
   \includegraphics[height=6cm]{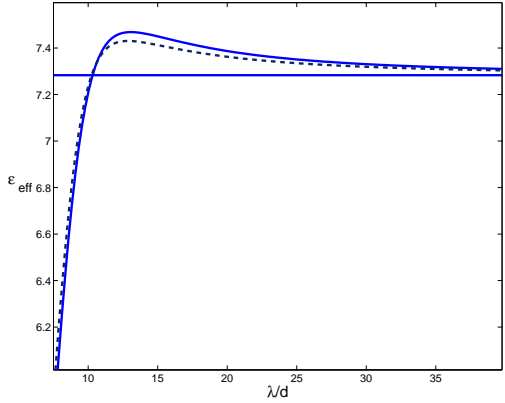}
   \end{center}
   \caption
   { \label{fig3} The reflection spectrum for the homogenized structure (dashed) and the sandwich structure (continuous) .}
\end{figure} 

This serves to test the proposed approach and also to evaluate the spatial dispersion effect in the medium. 
The reflection spectra for a plane wave in normal incidence and for both the sandwich structure and the homogeneous slab are given in fig. (\ref{fig3}). It can be shown that the homogenization works very well for $\lambda/d > 15$.
The resulting homogenized permittivity, depending on both the frequency and the horizontal Bloch vector, is given in fig.(\ref{fig4}). The averaged value is $\e_{\rm lw}=7.28318 \pm 10^{-5}$. The value obtained numerically for $\lambda/d=1000$ is: $\e_{\rm eff}=7.2832$. The homogenized permittivity was calculated for two angles of incidence $0$ and $\pi/4$: it can be seen that the effects of spatial dispersion are quite small.

\begin{figure}
   \begin{center}
   \includegraphics[height=6cm]{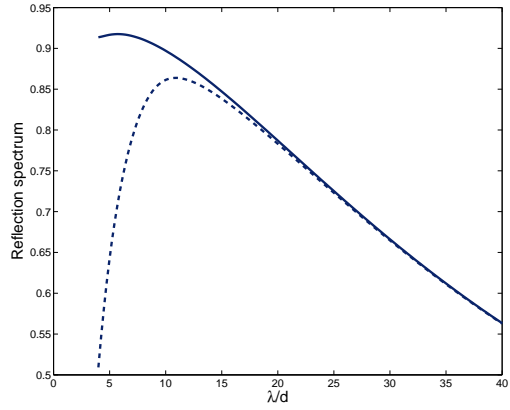}
   \end{center}
   \caption
   { \label{fig4} The effective permittivity obtained by using the cost function (\ref{opt}). The continuous curve corresponds to $\beta_0=0$ and the dashed one to $\beta_0=k_0 \sqrt{2}/2$. The horizontal line corresponds to the averaged permittivity $\e_{\rm lw}$. }
\end{figure} 

Second, we choose a resonant scatterer with a scattering coefficient of the form 
\begin{equation}\label{polo}
s_0=\frac{1}{g(\o)}\frac{\o-\o_z}{\o-\o_p}, 
\end{equation}
where $g(\o)$ is regular and satisfy $g \rightarrow 0$ as $\o \rightarrow 0$. This pole-and-zero form is quite a common one. The sandwich structure air-grating-air can be replaced by a slab with permittivity $\e_{\rm eff}(\omega,\beta_0)$ obtained from the optimisation procedure described above. 
\begin{figure}
   \begin{center}
   \includegraphics[height=6cm]{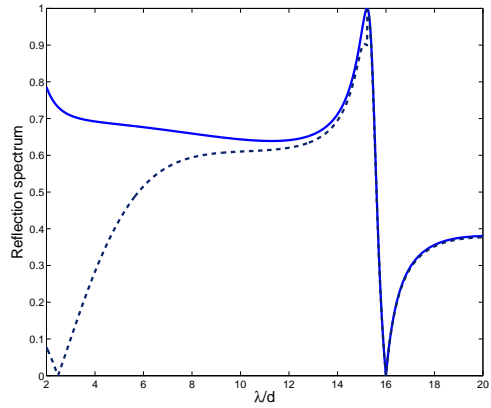}
   \end{center}
   \caption
   { \label{fig5} The effective permittivity obtained by using the cost function (\ref{opt}) for a grating of resonant scatterers with scattering coefficient given by (\ref{polo}).}
\end{figure} 
The reflection spectrum is given in fig. (\ref{fig5}), the homogenization procedure works very well for $\lambda/d >12$.
\begin{figure}[h!]
   \begin{center}
   \includegraphics[height=6cm]{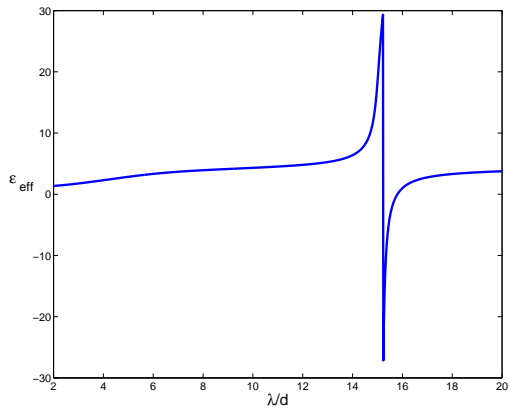}
   \end{center}
   \caption
   { \label{fig6} The effective permittivity obtained by using the cost function (\ref{opt}) .}
\end{figure} 
The corresponding homogenized permittivity is given in fig.(\ref{fig6}). The spatial dispersion effects are negligible in that situation. Interestingly, it seems that resonant structures can be homogenized at smaller ratio $\lambda/d$ than non resonant ones.
\section{Pseudo-differential homogenization}
In this section, a homogenization approach inspired by \cite{silveirinha} is given. For the sake of simplicity, the theory is specialized to the Helmholtz equation: $\Delta u+k_0^2 \e u=f$, where $\e$ is the dielectric function of the structure in fig.(\ref{fig1}). Let us define by $T$ the operator such that: $u=T(f)$. It is a pseudo-differential operator \cite{pseudo} whose symbol is denoted $a(r,k)$. The following form of $u$ holds:
\begin{equation}
u(r)=\int a(r,k) \hat{f}(k) e^{ik \cdot r} dk
\end{equation}
The symbol is the response of the system for a plane wave source. It can be shown that it has an expansion in terms of the Bloch spectrum \cite{prepa}:
\begin{equation}
a(r,k)=\sum_p  \frac{|\phi_p(r,k)|^2}{E_p(k)-k_0^2}
\end{equation}
where $\phi_p$ is such that: $\phi_p e^{-ikr}$ is periodic and $-\e^{-1}\Delta \phi_p=E_p \phi_p$.\\
Let us now consider the periodic medium with basic cell $Y$. The polarization field is: $P_{\eta}=\e_0(\e-1) u_{\eta}$. 
Define the average field through: $u_{\rm eff}=\int_Y a_{\eta}(r,k) e^{ik \cdot r} dr$ and the average of the polarization field: $$P_{\rm eff}=\e_0\int_Y (\e-1) a_{\eta}(r,k) e^{ik \cdot r} dr=\e_0(\e-1)\int_D  a_{\eta}(r,k) e^{ik \cdot r} dr$$
The effective permittivity is then $\e_{\rm eff}(k,\o)=1+\frac{P_{\rm eff}}{\e_0 u_{\rm eff}}$, that is:
\begin{equation}
\e_{\rm eff}(k,\o)=1+(\e-1)\frac{\int_D  a_{\eta}(r,k) e^{ik \cdot r} dr}{\int_Y  a_{\eta}(r,k) e^{ik \cdot r} dr}
\end{equation}
It is interesting to note that the expression above is regular whatever $k$, despite the fact that the symbol has poles. In the very low frequency domain, one obviously has:
\begin{equation}
\e_{\rm eff}(k,\o)=1+(\e-1)\frac{|D|}{|Y|}
\end{equation}

This approach could probably be made better by using a multiple scale expansion of the symbol in the form: $$a_{\eta}(r,k)=a_0(r,r/\eta,k)+\eta a_1(r,r/\eta,k)+...$$. Besides, it would also be interesting to recover the classical homogenization results for the case of a magnetic field linearly polarized along the wires. Work is in progress in that direction.

\section{Multiple scale homogenization}
This final approach is largely used in the mathematical community \cite{kozlov,bensoussan}, also it is practically ignored by physicists \cite{tretyadov}. By its very definition, it is the only one that can give a clear meaning to the notion of convergence of the fields. I will only give a sketch in a well-known situation, but I will give a new framework by using differential forms \cite{tartar}, because the structure of the theory is very nice then.
Maxwell equations  are written by using differential forms:
\be
d E^{\eta}=i \o \mu_0 H^{\eta} ,\, d H^{\eta}=-i\o \e_{\eta} \star E^{\eta}
\ee
where $d$ denotes the exterior derivative, $E^{\eta}=E_n^{\eta}dx^n$ and $H^{\eta}=H_n^{\eta}dx^n,\, n=1,2,3$
are 1-forms (obtained from the usual vector fields by means of the $\flat$ operator and a flat metric \cite{steenrod}) and $\star$ stands for the Hodge operator.
The starting point is to assume the following expansions for $E^{\eta}$ and $H^{\eta}$:
\begin{equation}\label{2scale}
\begin{array}{c}
E^{\eta}(x)=E^0(x,x/\eta)+\eta \, E^1(x,x/\eta)+\eta^2 E^2(x,x/\eta)+...\,,\\
 H^{\eta}(x)=H^0(x,x/\eta)+\eta \, H^1(x,x/\eta)+\eta^2 H^2(x,x/\eta)+...
 \end{array}
\end{equation}
where the fields $E^0,H^0,E^1, H^1...$ depend upon two sets of variables $(x,y)\in \mathbb{R}^3 \times Y$.

This can be justified in the framework of double-scale convergence \cite{allaire,nguetseng} but this would take us to far.

There is a nice mathematical structure linked to the multiscale expansion. The projection:
\be
\pi: E_0(x,y) \longrightarrow \int_Y E_0(x,y) dy
\ee
defines a fiber bundle. Where the fiber is the cotangent bundle to the flat torus: $F=T^{\ast}(\mathbb{R}^2/\mathbb{Z}^2)$ and the base is the cotangent bundle to the ambient space :${\cal B}=T^{\ast}(\mathbb{R}^2)$.Its trivialization is as follows:
\be
\pi^{-1}(U) \sim F \times {\cal B}
\ee
where $U$ is an open neighborhood in the base. The $y$ variable plays the role of a hidden variable that accounts for the microscopic behavior inside the basic cell.

The expansion (\ref{2scale}) induces the following splitting of operator $d$: $d=d_x+\frac{1}{\eta} d_y$. We arrive directly at the systems satisfied by the microscopic fields, holding on $Y$:
\begin{equation}\label{micro}
\left\{
\begin{array}{l}
d_y E_0=0 \\
\delta_y( \e E_0)=0 
\end{array}
\right .
\left\{
\begin{array}{l}
d_y B_0=0\\
\delta_y B_0=0
\end{array}
\right .
\end{equation}
where $\delta=-\star d \star$ is the exterior co-derivative.
The first system involves the electric field alone and is of purely electrostatic nature. Because the limit field only depends microscopically on $(y_1,y_2)$, one immediately obtains: $E^0_3(x,y)=E_3(x)$. Moreover the transverse microscopic electric field  $E^0_{\bot}=(E^0_1,E^0_2)$ is exact on the basic cell $Y$. This implies that $E^0_{\bot}(x,\cdot)$ reads as:
\be
E^0_{\bot}(x,\cdot)=E_{\bot}(x)+d_y \varphi(\cdot)
\ee
and therefore the effective electric field $E_{\bot}(x)=(E_1(x),E_2(x))$ belongs to the first de Rham cohomology space $H^1(Y)$, a space isomorphic to $\mathbb{R}^2$. The complete solving of the miscropic system is done by a linear decomposition: $\varphi=E_1 \varphi_1+E_2 \varphi_2$, where the functions $\varphi_j, j=1,2$ satisfy:
\begin{equation}
\begin{array}{c}
\Delta_y \varphi_j=0 \hbox{ on } Y \setminus D \\
\partial_n \varphi_j=-n_j \hbox{ on } \partial D
\end{array}
\end{equation}
This leads to the linear relation: $E_0={\cal E}(y) E(x)$, where:
\begin{equation}
{\cal E}(y) =\left(
\begin{array}{ccc}
1+\partial_1 \varphi_1 & \partial_1 \varphi_2  & 0\\
\partial_2 \varphi_1  &  1+\partial_2 \varphi_2 & 0 \\
0 & 0 & 1
\end{array}
\right)
\end{equation}

The second system shows that the magnetic field is both exact and co-exact, implying that $B_0$ does not depend on $y$. We denote: $B_0(x,y)=B(x)$.
Finally, the macroscopic equations read as:
\begin{equation}
d E^0=i\o \mu_0 H^0 ,\,
d H^0=-i\o \e_0\e E^0
\end{equation}
After averaging on $Y$, one obtains:
\begin{equation}
d E=i\o \mu_0 H ,\,
d H=-i\o \e_0 \e_{\rm eff} E
\end{equation}
where $\e_{\rm eff}=\int _Y \e(y) {\cal E}(y)\,dy$.
The usual anisotropic permittivity tensor is found. Generalizations and details can be found in \cite{jnp1,quasi}. The method can be extended to deal with resonant structures \cite{cras1} and obtain a homogenization result for higher bands that the first one, also it is sometimes wrongly believed that this approach can only deal with quasistatic problems. It is interesting to note that the degree of the forms involved can be a clue to the definition of the averaged field \cite{cras2}.

\section{Conclusion}
We have described three different approaches to the homogenization of a two dimensional dielectric metamaterial. The micro-local approach is less developed than the other but still seems quite interesting, and in need of mathematical development. Our preferred one still is the multiple scale approach because it can deal with the notion of convergence and the boundary conditions naturally. However, it is also in need of mathematical development in order to take spatial dispersion into account.
\vskip 1cm
{\bf Acknowledgments}\\
The financial support of the Agence Nationale de la Recherche through grant 060954 OPTRANS is acknowledged. D. Felbacq is a member of the Institut Universitaire de France.

\end{document}